\newtheorem{prop}{Proposition}
\newcommand{\fig}[4]{ \begin{figure}[#4]
  \centering
   \includegraphics[width=#3\textwidth]{Figures/#1}
   \caption{#2}\label{fig:#1}
  \end{figure}
}
\begin{document}
\title{Practical Short-Length Coding Schemes for Binary Distributed Hypothesis Testing} 


\author{
Elsa Dupraz$^{1}$, Ismaila Salihou Adamou$^{1}$, Reza Asvadi$^{2}$, and Tad Matsumoto$^{1,3}$ \\
\small $^{1}$ IMT Atlantique, CNRS UMR 6285, Lab-STICC, Brest, France \\  $^{2}$ Faculty of Electrical Engineering, Shahid Beheshti University, Tehran \\ $^{3}$ JAIST and University of
Oulu (Emeritus) \thanks{ This work has received a French government support granted to the Cominlabs excellence laboratory and managed by the National Research Agency in the ``Investing for the Future'' program under reference ANR-10-LABX-07-01.}
}

\maketitle

\begin{abstract}
 This paper investigates practical coding schemes for Distributed Hypothesis Testing (DHT).  While the literature has extensively analyzed the information-theoretic performance of DHT and established bounds on Type-II error exponents through quantize and quantize-binning achievability schemes, the practical implementation of DHT coding schemes has not yet been investigated. Therefore, this paper introduces practical implementations of quantizers and quantize-binning schemes for DHT, leveraging short-length binary linear block codes. Furthermore, it provides exact analytical expressions for Type-I and Type-II error probabilities associated with each proposed coding scheme. Numerical results show the accuracy of the proposed analytical error probability expressions, and enable to compare the performance of the proposed schemes. 
\end{abstract}

\section{Introduction}
In modern communication systems, the communication objective has evolved beyond the mere reconstruction of original information to encompass the application of Machine Learning tasks such as decision making or classification upon received data. This paper focuses on the specific case of Distributed Hypothesis Testing (DHT)~\cite{amari1998statistical}. In the DHT setup, the decoder aims to make a decision between two hypothesis $\mathcal{H}_0$ and $\mathcal{H}_1$, based on a coded version of the source $X$, and on some observed side information $Y$. The performance of a DHT scheme is typically evaluated from two error probabilities. Type-I error refers to deciding $\mathcal{H}_1$ while $\mathcal{H}_0$ was true, and Type-II error conversely involves selecting $\mathcal{H}_0$ when $\mathcal{H}_1$ is true. To fully address DHT, it is essential to investigate both information-theoretic performance limits and practical coding schemes for this setup. 

From an information-theoretic perspective, DHT was initially introduced by Berger in~\cite{berger1979decentralized}. Subsequently, Ahlswede and Csiszar established a lower bound on Type-II error-exponent through an achievability scheme based on a quantizer~\cite{ahlswede1986hypothesis}. 
Notably, this scheme was found to be optimal for testing against independence~\cite{watanabe2022sub}. 
Next, Shimokawa et al. improved the lower bound presented in~\cite{ahlswede1986hypothesis}, extending its applicability to general hypotheses, not restricted to testing against independence~\cite{shimokawa1994error}. This enhancement was achieved by employing a quantize-binning achievability scheme. 
The optimality of the quantize-binning scheme was then investigated on several DHT problems. It was shown in~\cite{rahman2012optimality} that this scheme is optimal for a certain class of testing against conditional independence problems, given that the tradeoff between quantization and binning is adressed properly. This analysis was further extended to encompass more generalized hypotheses, including those involving independent and identically distributed (i.i.d.) binary sources~\cite{haim2016binary}, non-binary sources~\cite{katz2015necessity}, and non-i.i.d. sources~\cite{adamou2023information}. The impact of discrete memoryless channels~\cite{sreekumar2019distributed}, Multiple-Access channels~\cite{salehkalaibar2018distributed}, or two-hop relay networks~\cite{salehkalaibar2019hypothesis} on the DHT performance was also analyzed. Multi-terminal source coding for DHT was also considered in~\cite{watanabe2017neyman}.  

While the DHT information-theoretic performance is now well known across a wide range of source models and hypotheses, the problem of designing practical coding schemes for this setup has been by far less investigated.  This paper takes a step towards bridging this gap, by focusing on the scenario where the sources $X$ and $Y$ are binary. 

Previous achievability proofs suggest to consider either quantizer-alone or quantize-binning coding schemes for DHT. Some existing works have already introduced practical binary quantizers~\cite{fridrich2007binary}, binning schemes~\cite{xiong2004distributed}, and quantize-binning schemes~\cite{wainwright2009low}, all constructed with linear block codes. But the constructions of~\cite{fridrich2007binary,xiong2004distributed,wainwright2009low} primarily focus on source reconstruction and typically involve very long source sequences, often exceeding $10^5$ bits. However, DHT inherently deals with short-length sequences, where just a few dozen bits may suffice for making the correct decision. And the construction of efficient short-length linear block codes is often known to be a challenging problem~\cite{cocskun2019efficient}. Consequently, an important question arises regarding whether binary quantizers and quantize-binning schemes are efficient structures for practical short-length DHT. 

This paper addresses practical coding schemes for DHT by providing two main contributions. Firstly, it introduces and compares three practical coding schemes. The first scheme, named the truncation scheme, involves transmitting only the first $\ell$ bits of the source sequence $\mathbf{x}^n$ and serves as a performance baseline. The second and third schemes are practical quantizer-alone and quantize-binning schemes, both constructed with short-length binary linear block codes. 
Secondly, the paper provides exact analytical expressions for Type-I and Type-II error probabilities of the three considered coding schemes. These new analytical tools should enable the optimization and comparison of the proposed practical schemes across a wide range of source and code parameters. Numerical results validate the accuracy of the analytical error probabilities by demonstrating their consistency with Monte-Carlo simulations. Additionally, the paper compares the performance of the three schemes for sources of length $n=31$ bits. 

The outline of the paper is as follows. Section~\ref{sec:model} describes the binary DHT problem. Section~\ref{sec:truncated} presents the truncation scheme. Section~\ref{sec:quantizer} introduces the quantizer scheme. Section~\ref{sec:qandb} describes the quantize-binning scheme. Section~\ref{sec:simulation} shows and discusses numerical results. 

\section{Problem Statement}\label{sec:model}
In what follows, $\llbracket 1,M \rrbracket$ is the set of integers between $1$ and $M$. Random variables are represented by capital letters, such as $X$, while their realizations are denoted by lower-case letters, like $x$. Vectors of length $n$, such as $\mathbf{X}^n$, are in bold. In addition, we use $w(\mathbf{x}^n)$ to denote the Hamming weight of the vector $\mathbf{x}^n$, and $d(\mathbf{x}^n,\mathbf{y}^n)$ to denote the Hamming distance between $\mathbf{x}^n$ and $\mathbf{y}^n$.  The binomial coefficient of the pair of integers $(n, k)$ with $k \leq n$ is expressed as $\binom{n}{k}$.

\subsection{DHT for binary sources}

We consider binary source vectors $\mathbf{X}^n$ and $\mathbf{Y}^n$ both of length $n$. The encoder observes $\mathbf{X}^n$, whereas the decoder observes $\mathbf{Y}^n$, as illustrated in Fig.~\ref{fig:sw_ht}. The encoder sends a coded version of $\mathbf{X}^n$, while the decoder aims to make a decision between two hypothesis $\mathcal{H}_0$ and $\mathcal{H}_1$,  based on both $\mathbf{Y}^n$ and the received coded data.

We assume that the $n$ symbols of the sequences $\mathbf{X}^n$ and $\mathbf{Y}^n$ are i.i.d. and drawn according to random variables $X$ and $Y$, respectively.  Furthermore, $X$ and $Y$ are jointly distributed according to the model $Y = X \oplus E$, where $E$ is a binary random variable independent of $X$, and  $P(X=1) = 1/2$. Additionally, we denote $p = \mathbb{P}(E = 1)$ with $0<p \leq 1/2$. The two hypotheses are expressed as:
\begin{equation}\label{eq:dht_problem}
 \left\{
    \begin{array}{ll}
       \mathcal{H}_0 : & p = p_0, \\
        \mathcal{H}_1 : & p = p_1.
    \end{array}
\right.
\end{equation}
We assume, without loss of generality, that $p_0 < p_1$. It is worth noting that the probability distributions of both $X$ and $Y$ are independent of the chosen hypothesis, given $P(X=1) = 1/2$. 
This model was investigated from an information-theoretic perspective for instance in~\cite{shimokawa1994error,katz2015necessity,haim2016binary}. 
Furthermore, when $p=1/2$, the problem~\eqref{eq:dht_problem} reduces to testing against independence~\cite{ahlswede1986hypothesis}. 

\fig{sw_ht}{Distributed hypothesis testing scheme}{0.4}{t}

\subsection{Error-exponent for binary DHT}\label{sec:error_exponent}
We will now restate an existing lower bound on the error exponent for the problem defined in Equation~\eqref{eq:dht_problem}, from the literature of information theory. Following the approach in~\cite{rahman2012optimality,shimokawa1994error}, we consider an encoding function:
\begin{equation}
 f_n: \{0,1\}^n \rightarrow \llbracket 1,2^v \rrbracket ,
\end{equation}
and a decision function $g_n:\llbracket 1,2^v \rrbracket\times \{0,1\}^n \rightarrow \{0,1\}$. We consider a rate-limited setup in which $v/n \leq R$. 

For given functions $(f_n,g_n)$, we define Type-I error probability $\alpha_n$ and Type-II error probability $\beta_n$ as in~\cite{haim2016binary} as
\begin{align}
 \alpha_n & = \mathbb{P}(g_n(  f_n(\mathbf{X}^n), \mathbf{Y}^n  ) = 1 | \mathcal{H}_0) , \\
 \beta_n & = \mathbb{P}(g_n(  f_n(\mathbf{X}^n), \mathbf{Y}^n  ) = 0 | \mathcal{H}_1) . 
\end{align}
For a given value $\epsilon \in (0,1)$ such that $\alpha_n < \epsilon$, Type-II error exponent $\theta$ is defined as~\cite{haim2016binary} 
\begin{equation}\label{eq:def_ee}
 \lim_{n \rightarrow \infty} \sup \frac{1}{n} \log_2 \frac{1}{\beta_n} \geq \theta.
\end{equation}

A generic lower bound on the error exponent $\theta$ is provided in~\cite{katz2015necessity}. Specifying this bound for the hypothesis testing problem defined in~\eqref{eq:dht_problem} leads to 
\begin{align}\label{eq:error_exponent_theta}
 \theta  \leq & \sup_{\delta \in [0,1]}  \min \bigg\{ R-[H_{2}(p_0*\delta)-H_{2}(\delta)],  \\
 & \left. (p_0*\delta)\log\frac{p_0*\delta}{p_1*\delta}+(1-(p_0*\delta))\log\frac{1-(p_0*\delta)}{1-(p_1*\delta)} \right\}. \notag
\end{align}
Here, $H_2$ is the binary entropy function, and $*$ is the binary convolution operator defined as $ x * y = (1-x)y + (1-y)x $, with $0 \leq x,y \leq 1$. 

\subsection{Short-length nature of DHT}
While the lower bound on the error exponent $\theta$ in~\eqref{eq:error_exponent_theta} provides a scaling law for Type-II error probability, it remains an asymptotic result due to the limit as $n$ tends to infinity in definition~\eqref{eq:def_ee}.
Nevertheless, it confirms the intuition that the problem inherently involves short sequences. For instance, consider parameters $p_0 = 0.05$, $p_1 = 0.5$, $\delta=0.1$, $R=0.4$.  When evaluating the quantity $ e^{-n\theta} $ for these parameters with $n=100$, the result is approximately $10^{-12}$, and for $n=50$, it yields approximately $10^{-6}$. This strongly suggests that practical schemes should focus on values of $n$ less than $50$.
Hence, we now introduce three practical coding schemes tailored for such short sequence lengths.

\section{Truncation Scheme}\label{sec:truncated}

\subsection{Code construction}
When considering DHT, there is no need to reconstruct all the exact values of the source bits $\mathbf{x}^n$. So a first straightforward solution consists of sending the first $\ell$ symbols of the source vector $\mathbf{x}^n$ at coding rate $R = \ell/n$. The decoder can then perform a standard Neyman-Pearson (NP) test~\cite{lehmann2005testing} on the pair $(\mathbf{x}^\ell,\mathbf{y}^\ell)$. 

Under a certain constraint $\alpha_n <  \epsilon$ on Type-I error probability, the NP lemma~\cite{lehmann2005testing} states that the following test:
\begin{equation}\label{eq:NPtest}
 \mathbb{P}_1(\mathbf{x}^\ell,\mathbf{y}^\ell) < \mu  \mathbb{P}_0(\mathbf{x}^\ell,\mathbf{y}^\ell) ,
\end{equation}
minimizes Type-II error probability $\beta_n$, where $\mu$ is a threshold value chosen to satisfy the Type-I error constraint. In~\eqref{eq:NPtest}, $\mathbb{P}_0$ and $\mathbb{P}_1$ are  the joint probability distributions of $(\mathbf{x}^\ell,\mathbf{y}^\ell)$ under hypothesis $\mathcal{H}_0$ and under hypothesis $\mathcal{H}_1$, respectively. 
Given that $p_0 < p_1$, it is shown in\cite{lehmann2005testing} that the test described by~\eqref{eq:NPtest} is equivalent to the condition:
\begin{equation}
 \sum_{i=1}^\ell (x_i \oplus y_i)  < \lambda_t,
\end{equation}
where $\lambda_t \in \mathbb{N}$ is an integer threshold value chosen so as to satisfy the constraint $\alpha_n < \epsilon$. Interestingly, this test depends on the value of the parameter $p_0$ solely through the choice of the value of $\lambda_t$. 

\subsection{Theoretical analysis}\label{sec:th_analysis_truncate}
For this scheme, analytical expressions of Type-I and Type-II error probabilities are given by~\cite{lehmann2005testing}
\begin{align}
 \alpha_n^{(t)} & = 1 - \sum_{j=0}^{\lambda_t} \binom{\ell}{j} p_0^j (1-p_0)^{\ell-j}, \\
 \beta_n^{(t)} & = \sum_{j=0}^{\lambda_t} \binom{\ell}{j}  p_1^j (1-p_1)^{\ell-j}. 
\end{align}
The truncation scheme can be seen  as a ``no-coding'' setup~\cite{gastpar2003code}. In our numerical results, it will serve as a baseline when evaluating the performance of the proposed coding schemes.  

\section{Quantization Scheme}\label{sec:quantizer}
In the literature of information theory, the seminal work of~\cite{ahlswede1986hypothesis} proposed to build a DHT coding scheme from a quantizer alone. We now introduce a practical short-length implementation of this scheme by using linear block codes. 
\subsection{Code construction}
In order to perform binary quantization, we consider the generator matrix $G_q$ of size $n\times m$ of a linear block code~\cite{richardson2008modern}. 
Then, for a given source vector $\mathbf{x}^n$ of length $n$, the encoder produces a vector $\mathbf{z}_q^m$ as~\cite{4036087}
\begin{equation}\label{eq:quantizer}
 \mathbf{z}_q^m = \arg\min_{\mathbf{z}^m} ~ d(G_q \mathbf{z}^m, \mathbf{x}^n) .
\end{equation}
The codeword $\mathbf{z}_q^m$ is transmitted to the decoder at a code rate $R=m/n$.

In~\cite{fridrich2007binary,wainwright2009low}, it is proposed to build efficient binary quantizers using low density generator matrices (LDGM). LDGM codes were considered so as to develop a low complexity message-passing algorithm called Bias-Propagation to solve~\eqref{eq:quantizer}.  However, the schemes introduced in~\cite{fridrich2007binary,wainwright2009low} consider very long codes (more than $10^5$ bits). Here, due to the short-length nature of the problem, we choose to discard the Bias-Propagation algorithm since it may lead to an important loss in performance on the considered codes. Instead, we will solve~\eqref{eq:quantizer} exactly by exhaustive search. Therefore, we consider any generator matrix $G_q$, not necessarily obtained from an LDGM code. 

The decoder first computes the quantized vector  $\mathbf{x}_q^n = G_q \mathbf{z}_q^m$. Then, since $\mathbb{P}(\mathbf{z}_q^m,\mathbf{y}^n) = \mathbb{P}(\mathbf{x}_q^n,\mathbf{y}^n)$, the NP test~\eqref{eq:NPtest} reduces to
\begin{equation}
 \sum_{i=1}^n (x_{q,i} \oplus y_i)  < \lambda_q,
\end{equation}
 where $\lambda_q$ is an integer threshold. 
 Here, compared to the truncation scheme, the decision is taken from longer vectors $\mathbf{x}_q^n$ and $\mathbf{y}^n$ of dimensions $n>m$. On the other hand, the vector $\mathbf{x}_q^n$ contains quantization errors compared to $\mathbf{x}^n$.

\subsection{Theoretical analysis}\label{sec:th_analysis_quantizer}
We now provide exact analytical expressions of Type-I and Type-II error probabilities for the quantization scheme. 
Consider the set of integers $\{E_\gamma^{(q)}\}_{\gamma \in \llbracket 0,d_{\text{max}}^{(q)}\rrbracket}$, where $E_\gamma^{(q)}$ is the number of words $\mathbf{x}^n$ of Hamming weight $\gamma$ that belong to the decision region $\mathcal{C}_0^{(q)}$ of $\mathbf{x}_q^n = \mathbf{0}^n$. 
In other words, $\mathbf{x}^n \in \mathcal{C}_0^{(q)}$ means that the solution of~\eqref{eq:quantizer} for $\mathbf{x}^n$ is $\mathbf{0}^m$. We further denote $ N_0^{(q)} = \sum_{\gamma=0}^{d_{\text{max}}^{(q)}} E_{\gamma}^{(q)}$. 

\begin{prop}\label{prop:error_quantization}
For the quantization scheme  and for a threshold value $\lambda_q$, Type-I and Type-II error probabilities are given by
\begin{align}
  \alpha_n^{(q)} & = 1 - \frac{1}{N_0^{(q)}} \sum_{\lambda=0}^{\lambda_q} \sum_{\gamma = 0}^{d_{\text{max}}^{(q)}} \sum_{j=0}^n E_\gamma^{(q)} \Gamma_{\lambda,j,\gamma}  p_0^j (1-p_0)^{n-j}, \label{eq:alphaq}  \\
 \beta_n^{(q)} & =  \frac{1}{N_0^{(q)}} \sum_{\lambda=0}^{\lambda_q} \sum_{\gamma = 0}^{d_{\text{max}}^{(q)}} \sum_{j=0}^n E_\gamma^{(q)} \Gamma_{\lambda,j,\gamma} p_1^j (1-p_1)^{n-j}, \label{eq:betaq}
\end{align}
where for $j = \gamma + \lambda - 2u$ and $0 \leq u \leq \min(\gamma,\lambda) \leq n$,
\begin{equation}\label{eq:Gamma_def}
 \Gamma_{\lambda,j,\gamma} = \binom{\gamma}{u} \binom{n-\gamma}{\lambda - u} . 
\end{equation}
\end{prop}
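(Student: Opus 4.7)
The plan is to exploit the linearity of the generator matrix $G_q$ in order to rewrite the test statistic as the Hamming weight of the XOR of the quantization error and the channel noise, and then to reduce the computation to a symmetric count over the single decision region $\mathcal{C}_0^{(q)}$.

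First, I would define the quantization error $\mathbf{V}^n = \mathbf{X}^n \oplus \mathbf{X}_q^n$. Since the set of codewords $\mathcal{L} = G_q\{0,1\}^m$ is a linear subspace, the decision region of any codeword $c$ satisfies $\mathcal{C}_c^{(q)} = c \oplus \mathcal{C}_0^{(q)}$, so all decision regions share the same cardinality $N_0^{(q)}$. Because $\mathbf{X}^n$ is uniform on $\{0,1\}^n$, conditioning on $\mathbf{X}_q^n = c$ makes $\mathbf{X}^n$ uniform on $\mathcal{C}_c^{(q)}$, and hence $\mathbf{V}^n = \mathbf{X}^n \oplus c$ is uniform on $\mathcal{C}_0^{(q)}$ whatever the value of $c$. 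Moreover, $\mathbf{E}^n$ is independent of $\mathbf{X}^n$ and therefore of $\mathbf{V}^n$. Substituting $\mathbf{Y}^n = \mathbf{X}^n \oplus \mathbf{E}^n$ into the test statistic yields
\begin{equation*}
\sum_{i=1}^n (X_{q,i} \oplus Y_i) = w(\mathbf{V}^n \oplus \mathbf{E}^n),
\end{equation*}
whose distribution depends only on the laws of $\mathbf{V}^n$ and $\mathbf{E}^n$.

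Second, I would condition on $w(\mathbf{V}^n) = \gamma$, which has probability $E_\gamma^{(q)}/N_0^{(q)}$, and observe that for any fixed $v$ of weight $\gamma$ the probability $\mathbb{P}_i(w(v \oplus \mathbf{E}^n) = \lambda)$ depends on $v$ only through its weight, by the permutation invariance of an i.i.d.\ $\mathbf{E}^n$. For such a $v$ I would parametrize the error patterns $e$ achieving $w(v \oplus e) = \lambda$ by the integer $u$ counting the positions where $v_i = 1$ and $e_i = 0$: such an $e$ necessarily has $\gamma - u$ ones on the support of $v$ and $\lambda - u$ ones off this support, which forces $w(e) = j = \gamma + \lambda - 2u$ with $0 \leq u \leq \min(\gamma,\lambda)$ and yields a count of $\Gamma_{\lambda,j,\gamma} = \binom{\gamma}{u}\binom{n-\gamma}{\lambda - u}$. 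Since each pattern of weight $j$ contributes probability $p_i^j(1-p_i)^{n-j}$, this gives $\mathbb{P}_i(w(v \oplus \mathbf{E}^n) = \lambda) = \sum_{j=0}^n \Gamma_{\lambda,j,\gamma}\, p_i^j(1-p_i)^{n-j}$.

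Finally, because the NP test accepts $\mathcal{H}_0$ exactly when the statistic is at most $\lambda_q$, I would obtain $\beta_n^{(q)}$ by summing the previous expression over $\lambda \in \{0,\dots,\lambda_q\}$ under $\mathcal{H}_1$ and $\alpha_n^{(q)}$ by taking the complement under $\mathcal{H}_0$; assembling the prefactor $E_\gamma^{(q)}/N_0^{(q)}$ produces exactly \eqref{eq:alphaq} and \eqref{eq:betaq}. I expect the main obstacle to be the first step: justifying that the quantization error $\mathbf{V}^n$ is uniform on $\mathcal{C}_0^{(q)}$ and independent of $\mathbf{E}^n$. This is the symmetry that collapses a sum over all $2^m$ codewords into a single sum indexed by the weight enumerator $\{E_\gamma^{(q)}\}_\gamma$ of $\mathcal{C}_0^{(q)}$; without it the clean form of the proposition would be lost. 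The remainder is routine combinatorial bookkeeping.
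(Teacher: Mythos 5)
Your proposal is correct and follows essentially the same route as the paper: reduce to the all-zero codeword (you justify this via the coset-translate/uniformity argument where the paper simply appeals to symmetry), condition on the weight $\gamma$ of the quantization error with probability $E_\gamma^{(q)}/N_0^{(q)}$, and count error patterns of weight $j=\gamma+\lambda-2u$ via $\Gamma_{\lambda,j,\gamma}$ before summing over $\lambda\leq\lambda_q$. Your direct count of error patterns each weighted by $p_i^j(1-p_i)^{n-j}$ is the same computation as the paper's conditioning on $d(\mathbf{X}^n,\mathbf{Y}^n)=j$ with the $\binom{n}{j}$ factors cancelling.
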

\begin{proof}
Since by symmetry the quantizer error probability is independent of the transmitted codeword~\cite{richardson2008modern}, we consider the all-zero codeword $\mathbf{x}_q^n = \mathbf{0}$.  
 We develop
 \begin{align}\label{eq:proof_q1}
   \alpha_n^{(q)} & = 1 - \sum_{\lambda=0}^{\lambda_q} \mathbb{P}_0(w(\mathbf{Y}^n)  = \lambda) \\ \notag
   & = 1 -   \sum_{\lambda=0}^{\lambda_q}\sum_{\gamma=0}^{d_{\text{max}}^{(q)}} \frac{E_\gamma^{(q)}}{N_0^{(q)}} \mathbb{P}_0(w(\mathbf{Y}^n)=\lambda | w(\mathbf{X}^n) = \gamma) \\ \notag
   & = 1 -   \sum_{\lambda=0}^{\lambda_q}\sum_{\gamma=0}^{d_{\text{max}}^{(q)}} \frac{E_\gamma^{(q)}}{N_0^{(q)}} \sum_{j=0}^n \mathbb{P}_0(d(\mathbf{X}^n,\mathbf{Y}^n) = j )  \frac{\Gamma_{\lambda,j,\gamma}}{ \binom{n}{j}} \\ \notag
   & = 1 - \sum_{\lambda=0}^{\lambda_q}\sum_{\gamma=0}^{d_{\text{max}}^{(q)}} \frac{E_\gamma^{(q)}}{N_0^{(q)}} \sum_{j=0}^n \binom{n}{j} p_0^j (1-p_0)^{n-j} \frac{\Gamma_{\lambda,j,\gamma}}{ \binom{n}{j}} .
 \end{align}
This gives~\eqref{eq:alphaq}. 
To obtain~\eqref{eq:betaq}, we remark that $ \beta_n^{(q)} = \sum_{\lambda=0}^{\lambda_q} \mathbb{P}_1(w(\mathbf{Y}^n)  = \lambda) $ and follow the same steps as in~\eqref{eq:proof_q1}, by replacing $p_0$ by $p_1$. 
\end{proof}

\section{Quantize-Binning Scheme}\label{sec:qandb}
We now propose a practical solution for the quantize-binning scheme which has been widely investigated in the literature of information theory for DHT~\cite{shimokawa1994error,rahman2012optimality,katz2015necessity}.
\subsection{Code construction}
In the quantize-binning scheme, we consider as before a generator matrix $G_q$ of size $n\times m$. We also resort to the parity check matrix $H_b$ of size $k \times m$ of another linear block code. After using $G_q$ for binary quantization as described in Section~\ref{sec:quantizer}, the encoder uses the matrix $H_b$ to compute
\begin{equation}
 \mathbf{u}^k = H_b \mathbf{z}_q^m . 
\end{equation}
The syndrom $ \mathbf{u}^k$ is then transmitted to the decoder.  In this case, the coding rate is given by $R = k/n$. 
At the decoder, in order to apply the NP test~\eqref{eq:NPtest}, we first identify by exhaustive search a vector $\hat{\mathbf{z}}_q^m$ as
\begin{equation}\label{eq:decoder_qandb}
 \hat{\mathbf{z}}_q^m = \arg\min_{\mathbf{z}^m} ~ d(G_q\mathbf{z}^m,\mathbf{y}^n) \text{ s.t. } H_b\mathbf{z}^m = \mathbf{u}^k .
\end{equation}
 We then apply the following test: 
\begin{equation}
 \sum_{i=1}^n (\hat{x}_{q,i} \oplus y_i)  < \lambda_{qb},
\end{equation}
where $\hat{\mathbf{x}}_q^n = G_q \hat{\mathbf{z}}_q^m$, and  $\lambda_{qb}$ is an integer threshold.   

Next, according to~\cite{katz2015necessity}, the binning allows us to leverage the side information vector $\mathbf{y}^n$ so as to further reduce the coding rate. However, it also introduces a binning error probability which can impact Type-I and Type-II error probabilities. While this problem was well investigated in the asymptotic regime~\cite{katz2015necessity}, we next discuss it for the considered short length code construction. 


\subsection{Theoretical analysis}\label{sec:th_analysis_qb}
We now consider the decision region $\mathcal{C}_0^{(qb)}$ for the all-zero codeword of the quantize-binning scheme. Especially, a side information vector $\mathbf{y}^n$ belongs to $ \mathcal{C}_0^{(qb)}$ if the solution of~\eqref{eq:decoder_qandb} for this vector is $\hat{\mathbf{z}}_q^m = \mathbf{0}^m$. We then define the set of integers $\{E_\nu^{(qb)}\}_{\nu \in \llbracket 0,d_{\text{max}}^{(qb)}\rrbracket}$, where $E_\nu^{(qb)}$ is the number of words $\mathbf{y}^n$ of Hamming weight $\nu$ that belong to the decision region $\mathcal{C}_0^{(qb)}$.  
We also define the set of integers $\{A_t^{(qb)}\}_{t \in \llbracket 0,n \rrbracket }$, where $A_t^{(qb)}$ is the number of codewords $\mathbf{x}_q^n$ of Hamming weight~$t$ such that there exists $\mathbf{z}_q^m$ that satisfies $\mathbf{x}_q^n = G_q \mathbf{z}_q^n$, and $H_b \mathbf{z}_q^m = \mathbf{0}^k$. As a result, the set  $\{A_t^{(qb)}\}_{t \in \llbracket 0,n \rrbracket }$ is the code weight distribution of the concatenated code. 

\begin{prop}
 For the quantize-binning scheme and for a threshold value $\lambda_{qb}$, Type-I and Type-II error probabilities are given by
 \begin{align}
  \alpha_n^{(qb)} & = 1 - \mathbb{P}_{B}(p_0) -  \mathbb{P}_{\bar{B}}(p_0),\\
  \beta_n^{(qb)} & = \mathbb{P}_{B}(p_1) +  \mathbb{P}_{\bar{B}}(p_1) ,
 \end{align}
 where
 \begin{align}\label{eq:Pb0Blambda}
  \mathbb{P}_{B}(\delta)  & = \sum_{\nu=0}^{\min(d_{\max}^{(qb)},\lambda_{qb})} \frac{E_{\nu}^{(qb)}}{\binom{n}{\nu}} \sum_{\gamma=0}^{d_{\max}^{(q)}} \frac{E_\gamma^{(q)}}{N_0^{(q)}} \sum_{j=0}^n \Gamma_{\nu,j,\gamma} \delta^j (1-\delta)^{n-j} ,  \\   \label{eq:Pb0barBlambda}
  \mathbb{P}_{\bar{B}}(\delta) & = \sum_{i=0}^n \left[ \left( \sum_{\gamma=0}^{d_{\text{max}}^{(q)}} \frac{E_{\gamma}^{(q)}}{N_0^{(q)}} \sum_{j=0}^n \Gamma_{i,j,w} \delta^j (1-\delta)^{n-j} \right) \right. \\ \notag
  & ~~~~~~~~~~~~~~~~~~~~~~ \left. \times \left( \sum_{t=1}^n \sum_{\nu=0}^{\lambda_{qb}} \frac{E_{\nu}^{(qb)}}{\binom{n}{\nu}} \frac{A_t^{(qb)} \Gamma_{i,\nu,t}}{\binom{n}{i}}  \right) \right] .
\end{align}
\end{prop}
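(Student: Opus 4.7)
The plan is to adapt the derivation of Proposition~\ref{prop:error_quantization} by decomposing the event that the test accepts $\mathcal{H}_0$ according to the binning outcome. Let $B$ denote the event $\hat{\mathbf{z}}_q^m = \mathbf{z}_q^m$ of correct binning, and $\bar{B}$ its complement. By linearity of the concatenated code, error probabilities are independent of the transmitted codeword, so I would fix without loss of generality $\mathbf{x}_q^n = \mathbf{0}^n$. Under this normalization, $B$ coincides with $\{\mathbf{y}^n \in \mathcal{C}_0^{(qb)}\}$, and $\mathbb{P}_\delta(\text{test accepts}\mid \mathbf{x}_q^n=\mathbf{0}) = \mathbb{P}_B(\delta) + \mathbb{P}_{\bar{B}}(\delta)$ follows by the law of total probability. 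Evaluating this identity at $\delta=p_0$ and $\delta=p_1$ directly supplies the stated formulas for $\alpha_n^{(qb)}$ and $\beta_n^{(qb)}$.

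For event $B$, the recovered vector is $\hat{\mathbf{x}}_q^n = \mathbf{0}^n$ and the test reduces to $w(\mathbf{y}^n) \leq \lambda_{qb}$. I would first compute the marginal $\mathbb{P}_\delta(w(\mathbf{y}^n)=\nu \mid \mathbf{x}_q^n=\mathbf{0})$ by reproducing the triple sum of~\eqref{eq:proof_q1}: condition on $w(\mathbf{x}^n)=\gamma$, which occurs inside $\mathcal{C}_0^{(q)}$ with proportion $E_\gamma^{(q)}/N_0^{(q)}$; then on the Hamming distance $j$, whose probability is $\delta^j(1-\delta)^{n-j}$ per error pattern, with $\Gamma_{\nu,j,\gamma}$ counting the weight-$\nu$ vectors $\mathbf{y}^n$ at distance $j$ from a fixed weight-$\gamma$ vector $\mathbf{x}^n$. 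The requirement $\mathbf{y}^n \in \mathcal{C}_0^{(qb)}$ is then imposed by multiplying by $E_\nu^{(qb)}/\binom{n}{\nu}$, the fraction of weight-$\nu$ vectors lying in the binning decision region. Restricting to $\nu \leq \min(d_{\text{max}}^{(qb)},\lambda_{qb})$ yields~\eqref{eq:Pb0Blambda}.

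For event $\bar{B}$, a binning error selects a nonzero codeword $\hat{\mathbf{x}}_q^n$ of the concatenated code $\{G_q\mathbf{z}^m : H_b\mathbf{z}^m = \mathbf{0}^k\}$, enumerated by $A_t^{(qb)}$ at weight $t$. By translation invariance the decoder outputs this particular $\hat{\mathbf{x}}_q^n$ iff $\mathbf{y}^n \oplus \hat{\mathbf{x}}_q^n \in \mathcal{C}_0^{(qb)}$, and the test accepts iff the weight $\nu$ of this shifted vector satisfies $\nu \leq \lambda_{qb}$. I would then fix $i = w(\mathbf{y}^n)$, reuse the same triple sum from the previous step to express $\mathbb{P}_\delta(w(\mathbf{y}^n)=i \mid \mathbf{x}_q^n=\mathbf{0})$, and count the admissible configurations by $A_t^{(qb)} E_\nu^{(qb)}\,\Gamma_{i,\nu,t}/\bigl(\binom{n}{\nu}\binom{n}{i}\bigr)$: the first two factors enumerate, respectively, concatenated codewords of weight $t$ and shifted vectors of weight $\nu$ in $\mathcal{C}_0^{(qb)}$, while $\Gamma_{i,\nu,t}/\binom{n}{i}$ turns the distance count into a uniform-conditional probability compatible with $w(\mathbf{y}^n)=i$. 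Summing over $i,t,\nu$ produces~\eqref{eq:Pb0barBlambda}.

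The main obstacle will be justifying the uniformity factorization underlying $E_\nu^{(qb)}/\binom{n}{\nu}$, namely that conditional on $w(\mathbf{y}^n)=\nu$ and $\mathbf{x}_q^n=\mathbf{0}$, the vector $\mathbf{y}^n$ may be treated as uniformly distributed over the weight-$\nu$ shell, so that membership in $\mathcal{C}_0^{(qb)}$ occurs with probability $E_\nu^{(qb)}/\binom{n}{\nu}$. This leans on the combined translation symmetry of the linear quantize-binning code and the exchangeability of the i.i.d.\ Bernoulli noise $\mathbf{e}^n$, which together reduce the joint law of $(\mathbf{x}^n,\mathbf{y}^n)$ to a function of their individual weights and mutual Hamming distance. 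Once that symmetry is cleanly invoked, the derivation proceeds identically for $\delta=p_0$ and $\delta=p_1$, yielding the two claimed expressions.
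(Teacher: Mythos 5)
Your proposal follows essentially the same route as the paper's own proof: fixing the all-zero quantized codeword by symmetry, splitting the acceptance event into correct-binning $B$ and erroneous-binning $\bar{B}$, computing $\mathbb{P}(w(\mathbf{Y}^n)=\nu)$ by reusing the triple sum from Proposition~\ref{prop:error_quantization}, and counting the $\bar{B}$ configurations through $A_t^{(qb)}$, $E_\nu^{(qb)}/\binom{n}{\nu}$ and $\Gamma_{i,\nu,t}/\binom{n}{i}$, exactly as the paper does. The uniformity/translation-invariance point you flag as an obstacle is also left implicit in the paper, so your argument matches theirs in both structure and level of rigor.
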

\begin{proof}
We consider the all-zero codeword $\mathbf{x}_q^n = \mathbf{0}$. 
 Under the hypothesis $\mathcal{H}_0$, we express
 \begin{equation}
  \alpha_n^{(qb)} = 1 -  \mathbb{P}_0(\mathcal{\widehat{H}}_0,B) -  \mathbb{P}_0(\mathcal{\widehat{H}}_0,\bar{B}) .
 \end{equation}
In this expression, $B$ is the event that the correct sequence $\hat{\mathbf{x}}_q = \mathbf{x}_q$ was retrieved at the decoder, while $\bar{B}$ is the event that an incorrect sequence $\mathbf{\hat{x}}_q \neq \mathbf{x}_q $ was output by the decoder. In addition, $\mathcal{\widehat{H}}_0$ is the event that hypothesis $\mathcal{H}_0$ was decided at the decoder. We further denote  $ \mathbb{P}_{B}(p_0) =  \mathbb{P}_0(\mathcal{\widehat{H}}_0,B)$ and $ \mathbb{P}_{\bar{B}}(p_0) =  \mathbb{P}_0(\mathcal{\widehat{H}}_0,\bar{B})$. 
We then express
\begin{align}
\mathbb{P}_{B}(p_0) &=  \sum_{\nu=0}^n \mathbb{P}_0(w(\mathbf{Y}^n)=\nu) \mathbb{P}_0(\mathcal{\widehat{H}}_0,B|w(\mathbf{Y}^n)=\nu) \\ 
& = \sum_{\nu=0}^{\min(d_{\max}^{(qb)},\lambda_{qb})}  \mathbb{P}_0(w(\mathbf{Y}^n)=\nu) \frac{E_{\nu}^{(qb)}}{\binom{n}{\nu}} .
\end{align}
Next, by following the same steps as in the proof of Proposition~\ref{prop:error_quantization}, we show that 
\begin{equation}\label{eq:Pwy}
  \mathbb{P}_0(w(\mathbf{Y}^n)=\nu) =  \sum_{\gamma=0}^{d_{\max}^{(q)}} \frac{E_\gamma^{(q)}}{N_0^{(q)}} \sum_{j=0}^n \Gamma_{\nu,j,\gamma} p_0^j (1-p_0)^{n-j} ,
\end{equation}
which provides~\eqref{eq:Pb0Blambda}. 
We then write 
\begin{equation}
  \mathbb{P}_{\bar{B}}(p_0) = \sum_{i=0}^n \mathbb{P}_0(w(\mathbf{Y}^n)=i) \mathbb{P}_0(\mathcal{\widehat{H}}_0,\bar{B}|w(\mathbf{Y}^n)=i),
\end{equation}
where $\mathbb{P}_0(w(\mathbf{Y}^n)=i)$ is given by~\eqref{eq:Pwy}. Next, we develop
\begin{align}\notag
 & \mathbb{P}_0(\mathcal{\widehat{H}}_0,\bar{B}|w(\mathbf{Y}^n)=i)  \\ & = \sum_{t=1}^n \sum_{\nu=0}^{\lambda_{qb}} \mathbb{P}_0(w(\hat{\mathbf{X}}_q^n) = t,d(\hat{\mathbf{X}}_q^n,\mathbf{Y}^n) = \nu |w(\mathbf{Y}^n) = i) \\
 & = \sum_{t=1}^n \sum_{\nu=0}^{\lambda_{qb}} \frac{E_\nu^{(qb)}}{\binom{n}{\nu}}  \frac{A_t^{(qb)} \Gamma_{i,\nu,t}}{\binom{n}{i}}.
\end{align}
This provides the expression of $ P_{\bar{B}}(p_0)$ in~\eqref{eq:Pb0barBlambda}.

We obtain the expression of $\beta_n^{(qb)}$ from the previous equations by noticing that  $\beta_n^{(qb)} = \mathbb{P}_{B}(p_1) + \mathbb{P}_{\bar{B}}(p_1)$.   
 
 \end{proof}
\section{Numerical Results}\label{sec:simulation}
This section provides Monte-Carlo simulation results for the proposed code constructions, and compares them with the theoretical Type-I and Type-II error probabilities. 


\subsection{Truncation versus quantization}
First of all, we evaluate the performance of the quantization scheme introduced in Section~\ref{sec:quantizer} and compare it against the truncation scheme described in Section~\ref{sec:truncated}. We set parameters $p_1 = 0.5$ (testing against independence), and $p_0 = 0.1$ or $p_0 = 0.07$.  For the quantization, we consider the BCH $(31,16)$-code with minimum distance $d_{\min} = 7$.  As a result, after applying the quantizer, $m=16$ bits are sent to the decoder. Therefore, we consider for comparison the truncation scheme with $\ell=16$. 
For these two schemes, Fig.~\ref{fig:Quantization_BCH3116} shows the receiver operating characteristic (ROC) curves which provide the Type-II error with respect to the Type-I error, obtained for different threshold values $\lambda_t, \lambda_q \in \llbracket 1,m \rrbracket$. Note that the ROC curves are usually considered for the evaluation of hypothesis tests. In Fig.~\ref{fig:Quantization_BCH3116}, the plain curves come from Monte-Carlo simulations averaged other $10000$ trials, while the dashed curves come from the error probability expressions provided in Sections~\ref{sec:th_analysis_truncate} and~\ref{sec:th_analysis_quantizer}. 

For the considered two values of $p_0$, we observe that the quantizer scheme performs better than the truncation scheme. This is due to the fact that with the quantizer, the decision is made on $n=31$ bits instead of $m=16$ bits, although the quantized vector $\mathbf{x}_q^n$ contains errors compared to the original $\mathbf{x}^n$. In addition, we observe that the theoretical Type-I and Type-II error probabilities are closely consistent with the Monte Carlo results. This is because the error probability expressions take into account the considered code through the terms $E_\gamma^{(q)}$. As a result, the theoretical expressions are found to be relevant tools for the DHT code design.   

 \fig{Quantization_BCH3116}{ROC curve for the BCH code $(31,16,7)$ user as a quantizer, compared to the truncation scheme.}{0.4}{t}

 \subsection{Truncation versus quantize-binning}
 We now evaluate the performance of the quantize-binning scheme proposed in Section~\ref{sec:qandb} compared to the truncation scheme. For the quantize-binning scheme, we consider the BCH $(31,16)$ code for the quantizer, and the Reed-Muller $(16,5)$ code with $d_{\text{min}}=8$ for the binning part. As a result, only $k=11$ coded bits are sent to the decoder. Therefore, for comparison, we consider the truncation scheme with $\ell=11$.  
 Fig.~\ref{fig:QandB} shows the ROC curves for the considered two schemes, both for $p_1=0.35$ and $p_0 = 0.01$ or $p_0 = 0.03$. In this case again, plain curves come from Monte-Carlo simulations averaged over $10000$ trials, and dashed curves come from the error probability expressions obtained in Sections~\ref{sec:th_analysis_truncate} and~\ref{sec:th_analysis_qb}.  We observe that the quantize-binning scheme performs much better than the truncation scheme, since it makes decision on $n=31$ bits instead of $\ell=11$ bits for the truncation scheme, despite the fact that both quantizer and binning can introduce errors. We also observe that the theoretical Type-I and Type-II error probabilities are closely consistent with the practical performance.
 
 Finally, it is worth noting that the performance of the quantize-binning scheme strongly depends on the considered pair of codes used for quantization and binning. In this regard, the derived theoretical error probability expressions can serve as a useful tool for the design of efficient pairs of codes. 
Also we did not include the entropy-check condition of~\cite{katz2015necessity} in our quantize-and-binning scheme, since this did not bring any further improvement in terms of practical Type-I and Type-II error probabilities compared to the truncated scheme.

%
  \fig{QandB}{ROC curve for the quantize-binning scheme built from the BCH code $(31,16,7)$ for quantization combined with the Reed-Muller code $(16,5,8)$ for binning}{0.4}{t}

\section{Conclusion}
In this paper, we have introduced two practical coding schemes for binary DHT, one built with a binary quantizer, and the other built with a quantize-binning scheme. Both schemes were designed from short linear block codes. For each considered scheme, we also derived theoretical expressions of Type-I and Type-II error probabilities. Simulation results demonstrated the superiority of the proposed schemes compared to the baseline truncation scheme, and also showed the accuracy of the proposed theoretical expressions. From a practical point of view, future works will include an interleaver design to improve the performance of the concatenated construction,  as well as complexity reduction of the decoders so as to allow for larger code length to be considered. 

\bibliography{sample}
\bibliographystyle{IEEEtran}

\end{document}